\theoremstyle{definition} 
\newcommand{\R}{\mathbb{R}}
\newcommand{\N}{\mathcal{N}}
\newcommand{\G}{\mathcal{G}}
\newcommand{\I}{\mathbb{I}}
\newcommand{\Z}{\mathbb{Z}}
\newtheorem{lem}{Lemma}
\newtheorem{thm}{Theorem}
\newtheorem{ass}{Assumption}
\newtheorem{prob}{Problem}
\begin{document}
	
\title{Observer-based Leader-following Consensus for Positive Multi-agent Systems Over Time-varying Graphs}
	
\author{Ruonan Li, Yichen Zhang, Yutao Tang, and Shurong Li \footnote{This work was supported by National Natural Science Foundation of China under Grants 61973043. R. Li, Y. Zhang, Y. Tang, and S. Li are all with the School of Artificial Intelligence, Beijing University of Posts and Telecommunications, Beijing 100876. P.\,R. China. Emails: nanruoliy@163.com, zhangyc930@163.com, yttang@bupt.edu.cn, lishurong@bupt.edu.cn}}
	
\date{}
	
\maketitle
	
{\noindent\bf Abstract}:  This paper addresses the leader-following consensus problem for discrete-time positive multi-agent systems over time-varying graphs. We assume that the followers may have mutually different positive dynamics which can also be different from the leader. Compared with most existing positive consensus works for homogeneous multi-agent systems,  the formulated problem is more general and challenging due to the interplay between the positivity requirement and high-order heterogeneous dynamics. To solve the problem, we present an extended version of existing observer-based design for positive multi-agent systems. By virtue of the common quadratic Lyapunov function technique, we show the followers will maintain their state variables in the positive orthant and finally achieve an output consensus specified by the leader. A numerical example is used to verify the efficacy of our algorithms.

{\noindent \bf Keywords}: positive consensus, leader-following, positive observer,  time-varying graph

\section{Introduction}

Leader-following consensus is a hot topic in multi-agent coordination and has been studied for many years due to its wide applications in multi-robot systems and sensor networks. In such problems, a special agent called leader is designed whose motion is independent of other agents and the rest of agents (i.e., followers) are expected to track the output trajectory of the leader. Numerous interesting results have been derived for different agent dynamics in both continuous time and discrete time scenarios \cite{hong2006tracking, ren2008distributed, ni2010leader, su2011cooperative, zhu2015event, huang2017consensus, liu2021discrete}. 

Among various classes of dynamic systems, positive systems, in which the state and output trajectories are restricted to the positive orthant, have received much attention during the past few years \cite{farina2000positive, blanchini2015switched}.  Since many descriptor variables are usually constrained to be nonnegative due to their intrinsic characteristics or physical laws,  positive systems have been found in different applications in systems biology, chemical process, and communication positive systems. Recently, many efforts have been made to reach a consensus for positive multi-agent systems. This is partially motivated by the fact that many simple dynamic models such as integrators and first-order lags with positive gains, as well as their series/parallel connections, are all positive. For instance, the authors studied on the positive consensus problem for single-integrator multi-agent systems in \cite{valcher2013stabilizability}. It is shown that this problem can  be equivalently converted to some positive stabilization problem for certain multi-input multi-output systems but subject to information constraints. Similar idea has also been exploited in  \cite{valcher2017consensus, liu2019positivity, bhattacharyya2022positive, sun2017stabilization, liu2022positivity} including a recent attempt \cite{cao2021positive}, where multiple Lyapunov functions have been used to deal with average dwell-time switching.  Nevertheless, most existing positive consensus papers are focused on homogeneous positive multi-agent system. That is, all agents share the same system matrices,  which definitely limits the potential applications of proposed positive consensus algorithms. Thus, it is natural for us to ask that whether the above arguments remain effective when the agents have positive but nonidentical dynamics.  Although similar issues have been discussed for standard multi-agent systems in the literature, whether and how to solve the problem for heterogeneous positive multi-agent systems is still unclear.

Based on aforementioned observations, we focus on the leader-following positive consensus problem for heterogeneous  multi-agent systems under time-varying communication graphs. In other words, all agents are allowed to possess mutually different dynamics from each other in both state-space dimensions and system matrices. As a result, the conventional treatment by Kronecker product can not be implemented. We have to seek new design methodologies to solve the positive consensus problem.  

Recently, observer-based designs are proven to be very effective in handling the heterogeneity of multi-agent systems  \cite{su2011cooperative,  huang2017consensus, tang2016coordination, liu2018distributed, liu2021redueced, cao2021positive, tang2018distributed, tang2020optimal}. Motivated by these results, we aim at distributed  observer-based design to achieve the expected leader-following positive consensus goal.  However, observer-based control even for a single positive plant is nontrivial in contrast to standard linear systems as shown in \cite{rami2011positive}. When facing multiple positive agents, we have to deal with extra technical difficulties from heterogeneous agent dynamics and time-varying interactions simultaneously. Thus, how to ensure the positivity of agent trajectories as well as the asymptotic tracking performance by distributed observer-based  controllers is definitely much more challenging than that for standard multi-agent systems or the homogeneous positive counterpart. 

To solve the problem,  we first constructively present a distributed positive observer as a combination of existing distributed results for standard linear multi-agent systems and centralized positive observer design for a single positive plant.    With a specialized initialization, we show its positivity and convergence  by rigorous analysis. After that, we utilize such positive observers to resolve the formulated leader-following positive consensus problem by both state feedback and output feedback controllers.  To our knowledge, this might be the first attempt to solve the leader-following consensus problems for general heterogeneous positive multi-agent systems in this setting. 

The main contributions of this paper can be summarized as follows: 
\begin{itemize}
	\item This paper extends existing positive consensus results for homogeneous multi-agent systems to the leader-following case for a group of heterogeneous multi-agent systems. The followers are allowed to have mutually different agent dynamics while the leader can generate more general positive collective motions other than a static consensus.  
	\item A novel distributed observer-based design is developed for  positive multi-agent systems. The present design can be taken as a distributed counterpart of the centralized rules in \cite{roszak2009necessary}  and extends existing observer-based protocols for standard multi-agent systems by further imposing a positivity requirement. 
	\item  By virtue of common Lyapunov function techniques, we show that the expected positive consensus can be reached under the proposed controllers over time-varying communication graphs. Compared with the recent result in \cite{cao2021positive}, we do not require any average dwell-time condition and allow arbitrary switching among connected graphs. 
\end{itemize}
 
The rest of this paper is organized as follows. In Section \ref{sec:pre},  we provide some preliminaries and the formulation of our positive consensus problem. Then, the main results are presented in Section \ref{sec:main} along with further discussions in Section \ref{sec:discuss}. After that, a simulation example is  given to illustrate the effectiveness of our algorithms in Section \ref{sec:simulation}. Finally,  conclusions are given in Section \ref{sec:conclusion}.

\section{Preliminary}\label{sec:pre}

In this section, we first introduce some preliminaries on graph theory and positive systems for the following analysis, and then present the formulation of our problem.

\subsection{Graph theory}
A directed graph (digraph) is employed  to represent the information flow among the agents. A directed graph $\mathcal {G}=(\mathcal {N},\, \mathcal{E}\,)$ is defined with node set $\mathcal{N}=\{1,\, {\dots},\, n\}$ and edge set $\mathcal{E}\subset \mathcal{N}\times \mathcal{N}$.  $(i,\,j)\in \mathcal{E}$ denotes an edge leaving from node $i$ and entering node $j$.  A directed path in digraph $\mathcal {G}$ is an alternating sequence $i_{1}e_{1}i_{2}e_{2}{\cdots}e_{k-1}i_{k}$ of nodes $i_{l}$ and edges $e_{m}=(i_{m},\,i_{m+1})\in\mathcal{E}$ for $l=1,\,2,\,{\dots},\,k$. If there exists a path from node $i$ to node $j$,  then node $j$ is said to be reachable from node $i$. If any two nodes are reachable from each other, the digraph is said to be strongly connected.  The neighbor set of agent $i$ is defined as $\mathcal{N}_i=\{j\colon (j,\,i)\in \mathcal{E} \}$ for $i=1,\,\dots,\,n$.  Specially, a directed graph is said to be bidirected if $(i,\,j) \in \mathcal{E}$ implies $(j,\,i)\in \mathcal{E}$.  The adjacency matrix $\mathcal{A}\in \mathbb{R} ^{n\times n}$ corresponding to digraph $\mathcal{G}$ is defined as follows: $a_{ij}=1$ if $(j,\,i)\in \mathcal{E}$; otherwise, $a_{ij}=0$.  The Laplacian $L=[l_{ij}]\in \mathbb{R}^{n\times n}$ for this digraph $\mathcal{G}$ is defined as $l_{ii}=\sum_{j\neq i} a_{ij}$ and $l_{ij}=-a_{ij} (j\neq i)$. 
It can be found that the Laplacian has a trivial eigenvalue with an eigenvector ${\bf 1}_n$. 

\subsection{Positive system} 
 Let $\mathbb{R}_+$ be the set of nonnegative real numbers and $\mathbb{Z}_+$ the set of all nonnegative integers.  Denote the set of all $m\times n$ matrices with each entry in $\mathbb{R}_+$ by  $\mathbb{R}_+^{m\times n}$  and let ${\bf 1}$ (or ${\bf 0}$) represent an all-one (or all-zero) matrix or vector with proper dimensions.    We say such matrices are nonnegative whenever $a_{ij}  \geq0$ and adopt the notation $A\ge {\bf 0}$.  If, in addition, $A $ has at least one positive entry, we say $A$ is positive with $A>{\bf 0}$, i.e., $A\geq {\bf 0}$ and $A \neq {\bf 0}$.  If all the entries are positive,   $A$ is strictly positive with $A\gg {\bf 0}$. The nonnegativity, positivity, and strictly positivity of vectors can be defined likewise.  A matrix $A\in \mathbb{R}^{n\times n}$ is said to be Schur  if all its  eigenvalues have norm less than one.  We use the symbols $A \succ {\bf 0}$ and $ A \succeq \, {\bf 0}$ to describe the fact that $A$ is positive definite and positive semidefinite matrix, respectively. We will reverse the direction of inequality symbols if $-A$ satisfies one of the aforementioned properties.  
 
 Consider the following linear discrete-time system:
 \begin{align} \label{sys:def}
 	\begin{split}
 		x(k+1) &=A(k)x(k)+B(k) u(k)\\
 		y(k) &=C(k) x(k),\quad k\in \Z_+
 	\end{split}
 \end{align}
 where $x$ is the  $n$-dimensional state vector, $u$ is the  $p$-dimensional control input, and $y$ is the $l$-dimensional output vector. Here, $A(k),\,B(k),\,C(k)$ are time-varying system matrices with compatible dimensions. System \eqref{sys:def} is said to be (internally) positive if for any nonnegative initial condition $x(0)$ and any nonnegative input $u(0),\,u(1),\,\dots $,  it holds that  $x(k)\geq {\bf 0}$ and $y(k)\geq {\bf 0}$ for $k\in \Z_+$.  For constant matrices $A,\,B,\,C$,  we say the pair $(A,\,B)$ is positively stabilizable if there exists a gain matrix $K\in {\mathbb R}^{n\times p}$ such that $A+BK$ is nonnegative and Schur stable. This system is said to be positively detectable if the pair $(A^\top,\, C^\top)$ is positively stabilizable.

 The following lemma is a discrete-time counterpart of Lemma {\rm VIII.1} in  \cite{angeli2003monotone}, which can be practically established. We omit its proof to save space. 
 \begin{lem}\label{lem:positive}
 	System \eqref{sys:def} is positive if and only if the matrices $A(k),\,B(k),\,C(k)$ are nonnegative for any $k\in \Z_+$.
 \end{lem}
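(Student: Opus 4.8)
The plan is to prove the two directions of the equivalence separately. The ``if'' direction will be a one-line induction on $k$; essentially all of the (minor) work is in the ``only if'' direction, where I will feed the system carefully chosen nonnegative initial conditions and inputs and read off the sign pattern of $A(k),B(k),C(k)$ from the requirement that the resulting trajectories stay in the nonnegative orthant.

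For sufficiency, assume $A(k),B(k),C(k)\ge 0$ for every $k$, fix an arbitrary $x(0)\ge 0$ and an arbitrary nonnegative input sequence $u(0),u(1),\dots$, and induct on $k$ to show $x(k)\ge 0$. The base case is the hypothesis on $x(0)$; for the inductive step, if $x(k)\ge 0$ then $A(k)x(k)$ and $B(k)u(k)$ are products of nonnegative matrices with nonnegative vectors, so $x(k+1)=A(k)x(k)+B(k)u(k)\ge 0$, and likewise $y(k)=C(k)x(k)\ge 0$. Hence $x(k)\ge 0$ and $y(k)\ge 0$ for all $k$, i.e.\ the system is positive.

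For necessity, assume the system is positive and fix a time index $k$. To extract $B(k)$, take $x(0)=0$ together with the input that vanishes at every step except $u(k)=e_r$, the $r$-th standard basis vector of $\mathbb{R}^p$; by the same induction as above the state stays zero up to time $k$, so $x(k+1)=B(k)e_r$, and positivity forces $B(k)e_r\ge 0$. Letting $r$ range over $1,\dots,p$ yields $B(k)\ge 0$. To extract $A(k)$ and $C(k)$, arrange the state at time $k$ to equal an arbitrary basis vector $e_i\in\mathbb{R}^n$ under zero input; then $x(k+1)=A(k)e_i$ and $y(k)=C(k)e_i$, and positivity forces the $i$-th columns of $A(k)$ and of $C(k)$ to be nonnegative, which over $i=1,\dots,n$ gives $A(k)\ge 0$ and $C(k)\ge 0$.

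I expect the one delicate point to be legitimizing the step ``the state at time $k$ equals $e_i$'' in the time-varying setting, because for $k\ge 1$ the set of states reachable at time $k$ from a nonnegative $x(0)$ need not exhaust $\mathbb{R}^n_+$. This is handled by reading ``positive system'' in the standard way, namely as a property of the dynamics that must hold from every initial time $k_0$ --- equivalently, by requiring each one-step map $(x,u)\mapsto A(k)x+B(k)u$ and each output map $C(k)$ to leave the nonnegative orthant invariant. With $k_0=k$ the choice $x(k)=e_i$ is then admissible and the argument above goes through, and this is precisely the form in which the lemma is invoked in the sequel; the case $C(0)$ is handled analogously once the output is also required to be nonnegative at $k=0$.
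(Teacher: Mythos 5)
Your proposal is correct. Note that the paper itself gives no proof of this lemma --- it states that the result is a discrete-time counterpart of Lemma VIII.1 of Angeli and Sontag and omits the argument ``to save space'' --- so there is nothing to compare against except the standard argument, which is exactly what you supply: induction for sufficiency, and probing with basis vectors for necessity. The one delicate point you flag is genuine: under the paper's literal definition (positivity quantified only over initial conditions at time $k_0=0$), the ``only if'' direction can fail for a time-varying system, since the reachable set at time $k\ge 1$ need not cover $\mathbb{R}^n_+$ (e.g.\ if $A(0)=B(0)=0$ then $A(1)$ is unconstrained), and your fix --- reading positivity as forward invariance of the nonnegative orthant from every initial time, equivalently as nonnegativity of each one-step map --- is the standard convention and the one consistent with how the lemma is actually used later in the paper (where, incidentally, only the ``if'' direction is ever invoked, in the proofs of Theorems \ref{thm:1observer} and \ref{thm:main-2}). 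The same remark applies to your observation that $C(0)\ge 0$ requires the output to be constrained at $k=0$ as well. In short: the proof is complete and the caveats you raise are accurate criticisms of the statement as literally written, not gaps in your argument.
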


 
\subsection{Problem formulation} 

Consider the leader-following coordination problem for a group of $N+1$ positive agents with $N$ followers of the following form:
\begin{align} \label{sys:follower}
	\begin{split}
		x_i(k+1) &= A_i x_i(k) +B_i u_i(k)  \\
		y_i(k) &= C_i x_i (k),\quad k \in \Z_+
	\end{split}
\end{align}
where $x_i \in \mathbb{R}^{n_i}$, $u_i \in \mathbb{R}$, and $y_i \in \mathbb{R}$ are the state, input, and output of agent $i$ with $i\in \bar\N\triangleq \{1,\,\dots,\, {N}\}$. Moreover, we assume the system \eqref{sys:follower} is both positively stabilizable and positively detectable for each $i\in \bar\N$. 

Suppose the leader (i.e., agent $0$) is of the following form:
\begin{equation} \label{sys:leader}
	\begin{split}
		x_0(k+1) &= A_0 x_0(k)\\
		y_0(k) &=C_0 x_0(k),\quad k \in \Z_+
	\end{split}
\end{equation}
with internal state $x_0 \in \R^{n_0}$ and output $y_0\in \R$.  We assume the leader is positive and marginally stable as follows.   
\begin{ass} \label{ass:exo}
	Matrices $A_0$ and $C_0$ are both nonnegative. Moreover, there exists a diagonal matrix $P_0\succ {0}\in \R^{n_0\times n_0}$ such that $A_0^\top P_0 A_0-P_0 \preceq{\bf 0}$. 
\end{ass}

Assume only a few followers know the information of the leader while the followers are allowed to communicate with others. Let $e_i(k)=y_i(k)- y_0(k)$ for $i\in \bar{\N}$ and $k\in \Z_+$. We aim at distributed rules for these followers to track the leader in the sense that $\lim_{k\to \infty} e_i(k)=0$ for each $i\in \bar{\N}$. 

For this purpose, we use digraph $\G(k)=\{\N,\,\mathcal{E}(k)\}$ with node set $\N=\{0\}\cup \bar \N$ to describe the allowed information flow among this multi-agent system at time $k\in \Z_+$. When agent $i$ can receive information from agent $j$ at time $k$, there is a directed edge from node $j$ to node $i$ in digraph ${\G}(k)$. Then, the induced subgraph $\bar {\G}(k)=\{ \bar{\N},\, \bar {\mathcal{E}}(k) \}$ represents the information flow among the followers.  Moreover, we consider the case when the communication topology among the agents are chosen from a finite set $\{ \G _1,\,\dots,\, \G_P\}$ with ${\cal P}=\{1,\,2,\,\dots,\, P\}$. To be clear, we introduce a function $\sigma(k) \colon \Z_+ \to {\cal P}$ to describe the communication graphs with ${\cal G}(k)=\mathcal{G}_{\sigma(k)}$.   

The following assumption has been widely used in the multi-agent coordination literature \cite{hong2006tracking,tang2016coordination,cao2021positive}.
\begin{ass}\label{ass:graph}
	For each $p\in \cal{P}$, digraph $\mathcal{G}_p$ contains a spanning tree with node $0$ as the root and subgraph $\bar{\cal G}_p$ is bidirected and strongly connected.  
\end{ass}

Under this assumption, the leader has an independent motion from the followers, while each follower is supposed to have bidirectional communications with the neighboring followers for its controller. A follower can also receive information from the leader if there exists a directed edge from node $0$ to it.  For digraph $\mathcal{G}_p$, we delete the first row and first column of its Laplacian and denote the rest submatrix by $H_p \in \mathbb{R}^{N\times N}$. Let $B_p\triangleq \mbox{diag}(b_{1p},\cdots, b_{Np})$ with $b_{ip}=1$ if node $i$ can get information from node $0$ and $b_{ip} =0$ otherwise. It can be verified that $H_p=L_p+B_p$ is symmetric. Moreover, it is also positive definite under Assumption \ref{ass:graph} according to Lemma 3 in \cite{hong2006tracking}. Since only finite number of graphs fulfill this assumption, we let  $\bar\lambda=\max_{p \in \mathcal{P}} \{ \lambda_{\max}(H_p)\}$ and $\underline\lambda=\min_{p \in \mathcal{P}} \{ \lambda_{\min}(H_p)\}$ with $\lambda_{\max}(\cdot)$ (or $\lambda_{\min}(\cdot)$) the maximal (or minimal) eigenvalue for some given positive definite matrices. These constants are verified to be well-defined and strictly greater than zero.

We are ready to formulate the leader-following positive consensus problem for heterogeneous multi-agent systems as follows: 
\begin{prob}\label{prob}
	Given a multi-agent system composed of $N$ positive followers \eqref{sys:follower} and a positive  leader \eqref{sys:leader} with graph $\mathcal{G}(k)$, find distributed controllers of the following form: 
	\begin{align}\label{ctrl:nominal}
		\begin{split}
			u_i (k) &=f_i(x_j(k),\,\eta_j(k))\\
			{\eta}_i(k+1)&=g_i(x_j(k),\,\eta_j(k)), \quad j \in \N_i\cup i	
		\end{split}
	\end{align} 
	with proper smooth functions $f_i$, $g_i$ and a compensator $\eta_i\in \R^{n_{{\eta}_i }}$ such that, for any initial point $x_i(0)\geq {\bf 0}$ and $x_0(0)\geq 
	{\bf 0}$, the composite system \eqref{sys:follower}--\eqref{ctrl:nominal} satisfies 
	\begin{itemize}
		\item[1)] The trajectory of $x_i(k)$ is  nonnegative, i.e., $x_i(k)\geq {\bf 0} $ for any $k\in \Z_+$ and $i\in \bar{\N}$.
		\item[2)] The tracking error $e_i(k)$ converges to $0$ as $ k \to \infty$.
	\end{itemize}	
\end{prob}

The formulated leader-following  consensus problem has been intensively studied  in the past few years for standard multi-agent systems. In the formulated problem, we further impose an extra  positivity requirement on the agent trajectories. Since we consider both heterogeneous agent dynamics and time-varying communication graphs, this intrinsic nonlinearity makes our problem more technically challenging than existing results for homogeneous multi-agent system over fixed graphs \cite{valcher2017consensus, liu2019positivity, bhattacharyya2022positive}. 



\section{Main Result} \label{sec:main}
 
To solve problem, we will constructively present observer-based controllers in this section to solve the formulated problem. The main design idea is illustrated in  Fig.~\ref{fig:framework}.  We will first construct effective distributed positive observers of the leader for each follower. Then, we embed this distributed observer into some designed decentralized tracking controllers (possibly with local state observers) to solve our leader-following positive consensus problem. 
\begin{figure}
	\centering
	\includegraphics[width=0.80\textwidth]{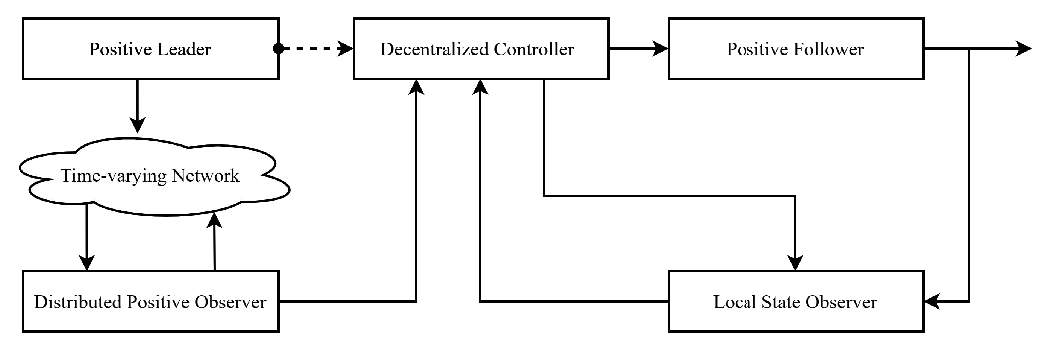} 
	\caption{Illustration of our control scheme.}
	\label{fig:framework}
\end{figure} 

Before the main result, we introduce an extra assumption on agent models.
\begin{ass} \label{ass:regeq}
	For each $i\in \bar \N$, there exist constant matrices $X_{i}\in \R_+^{n_i\times n_0}$ and $U_i\in \R_+^{1 \times n_0}$ satisfying the following matrix equations:
	\begin{align}\label{eq:regulator}
	\begin{split}
	A_i X_{i} +B_i U_i - X_{i} A_0&= {\bf 0}\\ 
	C_i X_{i} -C_0&={\bf 0}
	\end{split}		
	\end{align}
\end{ass}

This assumption can be understood as the solvability of the regulator equations for plant \eqref{sys:follower} with exosystem \eqref{sys:leader} in the terminology of output regulation, which plays a crucial role in our following designs. For homogeneous multi-agent systems, we can trivially choose $X_i=\I_{n_i}$ and $U_i={\bf 0}$ to confirm this assumption.  Similar conditions have been widely used in the multi-agent coordination literature \cite{su2011cooperative,  tang2018distributed, wang2010distributed}.

Let us start from the distributed observer part. Since we consider positive consensus problem, it is natural to develop distributed positive observers for each agent. For this purpose, we consider the following candidate:
	\begin{equation} \label{sys:observer-1}
		w_i (k+1) = A_0 w_i(k) + \mu A_0 w_{vi}(k), \, i\in\bar \N,\, k\in \Z_+
	\end{equation}
where $w_i \in \R^{n_0}$ is the estimate on the leader's state $x_0$ for agent $i$. Here, $w_{vi}(k)\triangleq\sum_{j=0}^N a_{ij}(k) (w_j(k)-w_i(k))$ for $i\in \bar \N$ with $w_{0}(k)=x_0(k)$ represents the neighboring information at agent $i$ and $\mu>0$ is a constant to be determined later. This observer is motivated by existing works for standard linear multi-agent systems \cite{su2011cooperative, tang2016coordination}. However, the positivity requirement might be violated if we only consider the choice of $\mu$ as in distributed observer-based designs for standard multi-agent systems.  To tackle this issue, we let $\tilde{w}_i(k)=x_0(k)-w_i(k)$ for $i\in \bar\N$ and extend the centralized initialization idea in \cite{roszak2009necessary} to the distributed case.

Here is a lemma to summarize the effectiveness of \eqref{sys:observer-1} to asymptotically reproduce the trajectory of the leader that fulfills the positive constraint.  

\begin{lem} \label{thm:1observer}  
Suppose Assumptions \ref{ass:exo} and \ref{ass:graph} hold. Let $ 0 <\mu < \frac{1}{N+1}$ and set $w_i(0)={\bf 0}$ for $i\in \bar \N$.  Then, for each $i\in \bar \N$ and $k\in \Z_+$, along the trajectory of \eqref{sys:observer-1},  $w_i(k) \geq   {\bf 0}$ for any initial condition 
	${x}_0(0)\geq  {\bf 0}$ and $\tilde{w}_i (k)$ exponentially converges to ${\bf 0}$ as $k \to \infty$.
\end{lem}
\begin{proof}
	To prove this lemma, we firstly put the observer \eqref{sys:observer-1} into a compact form: 
	\begin{align} \label{compactob1}
	\begin{split}
	{w}(k+1) &=[(\I_N- \mu H_{\sigma(k)} ) \otimes  A_0] w(k) +
	\mu (\Lambda_{\sigma(k)} \otimes A_0)({\bf 1}_{N}\otimes w_0(k) ) 
	\end{split}  
	\end{align}
	where $\Lambda_{\sigma(k)} \triangleq \mbox{diag}(a_{10}(k),\,\dots,\,a_{N0}(k))\geq   {\bf 0}$ and $ w(k)=\mbox{col}(  w_1(k),\,\dots,\,  w_N(k))$. Under Assumption \ref{ass:exo}, $\mu (\Lambda_{\sigma(k)} \otimes A_0)({\bf 1}_{N}\otimes w_0(k))$ is positive. To prove the positivity of the corresponding trajectory of \eqref{sys:observer-1},  we only need to prove the nonnegativity of its system matrix $(\I_N- \mu H_{\sigma(k)} ) \otimes  A_0$. For this purpose, we just have to show $\I_N-\mu H_{\sigma(k)} \geq {\bf 0}$ since  $A_0\geq 0$ under Assumption \ref{ass:exo}. Expanding this matrix as follows:
	\begin{align} \label{principal}
	\begin{bmatrix}
	1-\mu h_{11} & \mu h_{12} & \dots & \mu h_{1N}  \\
	\mu h_{21} & 1-\mu h_{22}  & \dots& \mu h_{2N} \\
	\dots & \dots& \dots& \dots \\
	\mu h_{N1} & \mu h_{N2}  & \dots& 1-\mu h_{NN} 	\end{bmatrix}
	\end{align}
	where $h_{ij}$ is the element of time-varying matrix $-H_{p}$ at its $i$-th row and $j$-th column.   Then, it is sufficient for us to ensure that the diagonal elements are nonnegative for any time, which naturally holds under the lemma condition. Then, we recall that $w(0)={\bf 0}$ and obtain the positivity of $w(k)$ at any time according to Lemma \ref{lem:positive}.  
	
	Next, we consider the following error system to prove the convergence of $\tilde w_i(k)$:   
	\begin{equation} \label{sys:error1w}
	\tilde{w}(k+1) = [(\I_N- \mu H_{\sigma(k)}) \otimes A_0] \tilde{w}(k)
	\end{equation}
	where $\tilde w(k)=\mbox{col}(\tilde w_1(k),\,\dots,\,\tilde w_N(k))$.  Since $(\I_N -\mu H_{\sigma(k)}) \otimes A_0$ is nonnegative as shown above, it follows that $\tilde w_i(k)\geq {\bf 0}$ due to the fact that $\tilde w_{i}(0)\geq {\bf 0}$. Thus, we only have to focus on the stability of \eqref{sys:error1w}. 
	
	To this end, we choose $V_{\tilde w}(k)=  \tilde {w}^\top(k) ({\I}_N\otimes P_0)\tilde {w}(k)$ as a candidate of common Lyapunov function. It is quadratic and positive definite. We check its evolution with respect to $k$.  Suppose $\sigma(k)=p$.  It follows then
		\begin{align*}
		\Delta_V(k)&\triangleq V_{{\tilde {w}}}(k+1)-V_{{\tilde {w}}}(k)\\
		&=\tilde{w}(k)^\top  [(\I_N- \mu H_p)^2 \otimes (A_0^\top P_0 A_0)] \tilde{w}(k)-\tilde{w}(k)^\top ({\I}_N\otimes P_0) \tilde{w}(k)
		\end{align*}
	
	Since $H_p$ is positive definite under Assumption \ref{ass:graph}, there must be a unitary matrix $U_p$ such that $U_p^{\top} H_p U_p =D_p$ where $D_p$ is a diagonal matrix with the eigenvalues of $H_p$ on the diagonal. Let $\check w(k)=(U_p^\top \otimes {\mathbb I}_{n_0})\tilde w(k)$. It follows then
	\begin{align*}
	\Delta_V(k) =\check{w}(k)^\top  [(\I_N- \mu D_{p})^2 \otimes (A_0^\top P_0 A_0) ] \check{w}(k)-\check{w}(k)^\top ({\I}_N\otimes P_0) \check{w}(k)
	\end{align*}
	Since $\I_N- \mu D_{p}$ is a diagonal matrix with diagonal elements $1-\mu \lambda_i(H_p)$ and $\lambda_i(H_p)$ the $i$-th ordered eigenvalue of $H_p$. Under the theorem conditions, we can verify that 
	\begin{align*}
	-1<1-\mu \bar \lambda \leq 1-\mu \lambda_i(H_p)<1-\mu \underline\lambda<1
	\end{align*}  
	Thus, there must be a constant $\bar \rho>0$ such that  $1-\mu \lambda_i(H_p) \leq \bar \rho<1$   for any $p$. This implies that 
	\begin{align*}
	\Delta_V(k)&\leq \check{w}(k)^\top  \{[(\I_N- \mu D_{p})^2-\I_N] \otimes  P_0 \} \check{w}(k)\\
	&\leq -(1-\bar \rho^2) V_{\tilde w}(k)
	\end{align*}
	Since the above arguments hold for any $k\in \Z_+$ and $p\in \mathcal{P}$, $V_{\tilde w}(k)$ is indeed a common Lyapunov function for the switched error system \eqref{sys:error1w}.  From this, we  conclude the  exponential convergence of  ${\tilde w}(k)$ towards ${\bf 0}$ as $k \to  \infty$ and complete the proof. 
\end{proof}

Compared with plenty of distributed observers for standard multi-agent systems, we utilize a special initialization to ensure the positivity of this observer \eqref{sys:observer-1}. This enables us a positive and convergent estimate of the expected trajectory determined by the leader to facilitate our following design. 

\begin{algorithm}
	\caption{Distributed State Feedback Controller for Positive Consensus}  \label{alg:state-feedback}
	\SetKwInput{Parameter}{Parameters}
	\SetKwInput{Initial}{Initialization}
	\SetKwInput{Update}{Update}
	\KwData{Follower \eqref{sys:follower}, leader \eqref{sys:leader}, graph  $\mathcal{G}_p$,  switching signal $\sigma(t)$\;}
	
	\Parameter{Determine $\mu$, $K_{1i}$, and $K_{2i}$ for $i\in \bar \N$: 
		
		a. Find a feasible   diagonal matrix $Q_i> {\bf 0}$   such that  
		\begin{align*}
		&(B_i^\top Q_i^{-1}B_i)\mathbb{I}_{n_i}-B_i B_i^\top Q_i^{-1}  \geq {\bf 0}\\ 
		&A_i Q_i A_i^\top-B_iB_i^\top- Q_i \prec {\bf 0} 
		\end{align*}
		
		b. Solve the equations \eqref{eq:regulator} to obtain $X_i\geq {\bf 0}$ and $U_i\geq {\bf 0}$\;
		
		c. Set $0 <\mu < \frac{1}{N+1}$, $K_{1i}=-(B_i^\top Q_i^{-1}B_i)^{-1}B_i^\top Q_i^{-1} A_i$, $K_{2i}=U_i-K_{1i}X_i$ \;
	}
	\Initial{Set $w_i(0)={\bf 0}$ for $i\in \bar \N$\;} 
	\Update{Communicate with the neighbors and update local data according to 
		\begin{align*}
		w_{vi}(k)&=\sum\nolimits_{j=0}^N a_{ij}(k) (w_j(k)-w_i(k))\\
		u_i(k)&=K_{1i} x_i(k) + K_{2i} w_i(k)\\
		{x}_i(k+1) &= A_ix_i(k)+ B_iu_i(k) \\
		{w}_i(k+1)&= A_0 w_i(k) + \mu  A_0 w_{vi}(k),\quad i\in \bar \N, \quad k\in \Z_+
		\end{align*}
	}	
	\KwResult{Output $y_i(k)$ and tracking error $y_i(k)-y_0(k)$.}
\end{algorithm}

Next, we complete the whole design to solve our formulated  positive consensus by  distributed  controllers.  We first choose a matrix $K_{1i} \in \R^{1\times n_i}$ such that $A_i+B_i K_{1i}$ is nonnegative and Schur. Such a gain matrix indeed exists under our assumptions and can be taken as a nonpositive one according to Lemma 6 in \cite{roszak2009necessary}. Let $K_{2i}=U_i-K_{1i}X_i$. We incorporate the  preceding  distributed positive observer \eqref{sys:observer-1} into a full-information rule and present the following state feedback controller for agent $i$:
\begin{align}\label{con-obs-1}
	\begin{split}
		u_i(k) &=K_{1i} x_i(k) + K_{2i} w_i(k)\\
		{w}_i(k+1) &= A_0 w_i(k) +\mu A_0 w_{vi}(k),\quad i\in \bar \N
	\end{split}
\end{align}
where $w_i(0)={\bf 0}$ and $\mu$ is specified as in Lemma \ref{thm:1observer}.  This controller is of the form \eqref{ctrl:nominal} and indeed distributed relying  on local communication and computation at agent $i$.  
For a better illustration, we summarize the design procedure as Algorithm \ref{alg:state-feedback} including how to construct corresponding gain matrices. 

The main result of this paper is summarized  as follows.
\begin{thm}\label{thm:main}
	Suppose Assumptions \ref{ass:exo}--\ref{ass:regeq} hold. The formulated leader-following positive consensus problem  for multi-agent systems \eqref{sys:follower} and \eqref{sys:leader} can be solved by distributed controllers of the form \eqref{con-obs-1}. 
\end{thm}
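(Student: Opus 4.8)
The plan is to verify that the controller \eqref{con-obs-1} satisfies both requirements of Problem \ref{prob} by combining Theorem \ref{thm:1observer} with a change-of-variables argument based on the regulator equations. First I would invoke Theorem \ref{thm:1observer}: since $\mu$ is chosen as prescribed there, the observer states satisfy $w_i(k)\geq{\bf 0}$ for all $k$ and $\tilde w_i(k)=x_0(k)-w_i(k)\to{\bf 0}$ as $k\to\infty$. This handles the ``upstream'' part of the cascade. Then I would write the closed-loop follower dynamics. Substituting $u_i(k)=K_{1i}x_i(k)+K_{2i}w_i(k)$ into \eqref{sys:follower} gives $x_i(k+1)=(A_i+B_iK_{1i})x_i(k)+B_iK_{2i}w_i(k)$. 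For positivity of $x_i(k)$: $A_i+B_iK_{1i}$ is nonnegative by the choice of $K_{1i}$; and $B_iK_{2i}w_i(k)$ is a nonnegative input provided $B_iK_{2i}\geq{\bf 0}$ — here I would use that $K_{1i}$ can be taken nonpositive (Lemma 6 in \cite{roszak2009necessary}), so $K_{2i}=U_i-K_{1i}X_i\geq U_i\geq{\bf 0}$ since $X_i\geq{\bf 0}$ and $U_i\geq{\bf 0}$ by Assumption \ref{ass:regeq}, and $B_i\geq{\bf 0}$ because the follower \eqref{sys:follower} is positive (Lemma \ref{lem:positive}). Hence $B_iK_{2i}\geq{\bf 0}$, and since $x_i(0)\geq{\bf 0}$ and the $w_i$-trajectory is nonnegative, Lemma \ref{lem:positive} gives $x_i(k)\geq{\bf 0}$ for all $k\geq 0$. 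This settles requirement 1).

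For the tracking requirement 2), I would introduce the error coordinate $\xi_i(k)=x_i(k)-X_i x_0(k)$. Using the first regulator equation $A_iX_i+B_iU_i-X_iA_0={\bf 0}$ and the identity $U_i=K_{2i}+K_{1i}X_i$, a direct computation gives
\begin{align*}
\xi_i(k+1) &= (A_i+B_iK_{1i})x_i(k)+B_iK_{2i}w_i(k)-X_iA_0x_0(k)\\
&=(A_i+B_iK_{1i})\xi_i(k) + (A_i+B_iK_{1i})X_ix_0(k)+B_iK_{2i}w_i(k)-X_iA_0x_0(k)\\
&=(A_i+B_iK_{1i})\xi_i(k) + B_iK_{2i}\bigl(w_i(k)-x_0(k)\bigr)\\
&=(A_i+B_iK_{1i})\xi_i(k) - B_iK_{2i}\,\tilde w_i(k).
\end{align*}
So $\xi_i$ obeys a linear system driven by $\tilde w_i(k)$, with system matrix $A_i+B_iK_{1i}$ that is Schur by construction. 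Since $\tilde w_i(k)\to{\bf 0}$ (a vanishing input) and a Schur linear system is input-to-state stable, it follows that $\xi_i(k)\to{\bf 0}$. Finally, using the second regulator equation $C_iX_i-C_0={\bf 0}$, the tracking error becomes $e_i(k)=y_i(k)-y_0(k)=C_ix_i(k)-C_0x_0(k)=C_i\xi_i(k)+(C_iX_i-C_0)x_0(k)=C_i\xi_i(k)\to{\bf 0}$, which is requirement 2).

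The only genuinely nontrivial point is the ISS/convergence step for $\xi_i$ under the vanishing perturbation $\tilde w_i(k)$: because the overall system is a cascade (the $\tilde w$-subsystem is autonomous and exponentially stable by the common-Lyapunov argument in Theorem \ref{thm:1observer}, feeding into the Schur $\xi_i$-subsystem), convergence of the cascade is standard, but I would make it explicit — e.g. by noting $\tilde w_i(k)$ decays exponentially (geometrically) from the common Lyapunov function and then bounding $\|\xi_i(k)\|\leq \|(A_i+B_iK_{1i})^k\|\|\xi_i(0)\| + \sum_{j=0}^{k-1}\|(A_i+B_iK_{1i})^{k-1-j}\|\,\|B_iK_{2i}\|\,\|\tilde w_i(j)\|$, which tends to $0$ since both factors in the convolution decay geometrically. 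The rest is routine verification that \eqref{con-obs-1} has the distributed form \eqref{ctrl:nominal}, already noted in the text. I would also remark that no common Lyapunov function is needed for the $\xi_i$-subsystem since $A_i+B_iK_{1i}$ is time-invariant; the time-varying topology only enters through the observer, already handled by Theorem \ref{thm:1observer}.
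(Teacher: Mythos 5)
Your proposal is correct and follows essentially the same route as the paper: positivity of $x_i$ from the nonnegativity of $A_i+B_iK_{1i}$ and of $K_{2i}w_i(k)$ (via Theorem \ref{thm:1observer}), followed by the change of variables $x_i-X_ix_0$ and an ISS/cascade argument using the regulator equations. Your explicit computation of the error dynamics (yielding $-B_iK_{2i}\tilde w_i(k)$ rather than the $+$ sign appearing in the paper's equation \eqref{sys:error-final-1}) and your justification that $B_iK_{2i}\geq{\bf 0}$ are in fact slightly more careful than the paper's, though the sign is immaterial to the convergence conclusion.
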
 
\begin{proof}
	Putting the agent dynamics \eqref{sys:follower} and controller \eqref{con-obs-1}  together, we first obtain the following composite system: 
	\begin{align} \label{composite}
		\begin{split}
			{x}_i(k+1) &= (A_i+B_iK_{1i})x_i(k)+ B_iK_{2i} w_i(k)\\  
			{w}_i(k+1) &= A_0 w_i(k) + \mu  A_0 w_{vi}(k)\\
			y_i(k)&=C_i x_i(k), \quad i \in \bar\N
		\end{split}
	\end{align}	

Under theorem conditions, we can verify that $K_{2i}=U_i-K_{1i}X_i\geq {\bf 0}$ and $w_i(k)\geq {\bf 0}$ by Lemma \ref{thm:1observer}. It follows that $K_{2i} w_i(k)$  is nonnegative for any $k\in \Z_+$.  Since the above system matrix  $A_i+B_iK_{1i}$ at agent $i$ is positive by the choice of $K_{1i}$, we view   the term $B_iK_{2i} w_i$ as a control input of the $x_i$-subsystem and have $x_i(k)\geq{\bf 0}$ for any $x_i(0)\geq {\bf 0}$ and $k\in \Z_+$. Then, we are left to show that the convergence of the tracking error $e_i(k)$ towards $0$ as $k \to \infty$. 
	
For this purpose, we let $\tilde x_i (k)=x_i(k)-X_ix_0(k)$ and obtain the following tracking error system for agent $i$: 
	\begin{align}\label{sys:error-final-1}
		\begin{split}
			{\tilde x}_i(k+1) &= (A_i+B_iK_{1i})\tilde x_i(k)+ B_iK_{2i} \tilde w_i(k)\\
			{\tilde w}(k+1) &=[(\I_N-\mu H_{\sigma(k)}) \otimes A_0]\tilde{w}(k)\\
			e_i(k)&=C_i\tilde x_i(k), \quad i\in \bar \N
		\end{split}
	\end{align}
We consider the $\tilde x_i$-subsystem and regard  $B_iK_{2i} \tilde w_i(k)$ as an input. Since $A_i+B_iK_{1i}$ is Schur by theorem conditions, the $\tilde x_i$-subsystem must be input-to-state stable with state $\tilde x_i$ and input. Recalling Lemma \ref{thm:1observer}, we know that $\tilde w(k)$ is  convergent. Thus, we can conclude the convergence of $\tilde x_i$ towards ${\bf 0}$ as $k \to \infty$, which implies $\lim_{k\to \infty} e_i(k)= { 0}$. The proof is thus complete.  
\end{proof}

In many circumstances, the state $x_i$ may be unavailable for our controllers due to the high cost of sensors or physical constraints. With the proposed distributed positive observer \eqref{sys:observer-1}, we can further present an output feedback extension for  \eqref{con-obs-1}  as follows:
\begin{align} \label{con-obs-2}
	u_i(k)&= K_{1i} \eta_i(k) + K_{2i} w_i(k) \nonumber\\
	\eta_i(k+1)&=(A_i- K_{3i}C_i) \eta_i(k)+ B_i u_i(k)+K_{3i} y_i(k) \nonumber\\
	{w}_i(k+1) &= A_0  w_i(k) + \mu A_0 w_{vi}(k), \quad i\in \bar \N
\end{align}
where $K_{3i}\in \R_+^{n_i\times 1}$ is a matrix such that $(A_i-K_{3i}C_i)$ is nonnegative and Schur. Similar to \eqref{con-obs-1}, we set $\eta_i(0) = {\bf 0}$ to ensure the positivity of corresponding estimate $\eta_i$ for $x_i$. 

The effectiveness of this controller is given as follows.
\begin{thm}\label{thm:main-2}
	Suppose Assumptions \ref{ass:exo}--\ref{ass:regeq} hold.  The formulated leader-following positive consensus problem for positive multi-agent system \eqref{sys:follower} and \eqref{sys:leader} can be solved by distributed output feedback controllers of the form \eqref{con-obs-2}.
\end{thm}
\begin{proof}
	The existence of $K_{3i}$ is guaranteed by the positive detectability of $(A_i,\,C_i)$. Since the convergence part is similar to that of Theorem \ref{thm:main}, we only have to show $x_i(k) \geq {\bf 0}$ under the controller \eqref{con-obs-2}. 
	
	Let $\bar x_i(k)= x_i(k)-\eta_i(k)$ and put the full composite system  as follows:
	\begin{align*}
		{x}_i(k+1) &= (A_i+B_iK_{1i})x_i(k)+\Xi_i(k)\\
		\bar x_i(k+1) &=(A_i-K_{3i}C_i) \bar x_i(k)\\
		{w}_i(k+1) &= A_0 w_i(k) + \mu A_0 w_{vi}(k)\\
		y_i(k)&=C_i x_i(k),\quad i\in \bar \N
	\end{align*}
	where $\Xi_i(k)\triangleq -B_iK_{1i}\bar x_i(k) + B_iK_{2i} w_i(k)$. By the choice of $K_{3i}$, the matrix $A_i-K_{3i}C_i$ is nonnegative and Schur. Note that $\bar x_i(0) = x_i(0)-\eta_i(0) = x_i(0) \geq {\bf 0}$,  we have $\bar x_i(k)\geq {\bf 0}$ from Lemma \ref{lem:positive}. Again, we consider the $x_i$-subsystem and view $\Xi_i(k)$ as its input. Since $-B_iK_{1i}\bar x_i(k) \geq {\bf 0}$ by the  nonpositivity of $K_{1i}$, the term $\Xi_i(k)\geq {\bf 0}$ for any time $k$. As a consequence, $x_i(k) \geq {\bf 0}$ for any $k$ by Lemma \ref{lem:positive}.  The proof is thus complete. 
\end{proof}
	
\section{Discussion} \label{sec:discuss}

In this section, we provide some remarks to show how our observer-based design extends previously published results. 

\subsection{The nonexistence of common copositive Lyapunov functions} 

Copositive linear Lyapunov functions have been widely used in positive consensus analysis for fixed graphs. When the communication graph is time-varying, we may expect some common copositive Lyapunov function. Unfortunately, such common copositive Lyapunov functions may not exist.  This might be one of the technical obstacles hindering us from general positive consensus results over time-varying graphs.

Here is  a counterpart example.  Suppose we have two followers and the matrix $A_0=1$.  The communication graphs are chosen from digraphs $\G_1$ and $\G_2$ with
	\begin{align*}
		H_1=\begin{bmatrix}
			2&-1\\-1&1
		\end{bmatrix},\quad H_2=\begin{bmatrix}
		1&-1\\-1&2
		\end{bmatrix}
	\end{align*}

Consider a distributed observer of the form \eqref{sys:observer-1}. We put down the following matrix
	\begin{align*}
		\begin{bmatrix}
			H_1-\I_2 & H_2-\I_2
		\end{bmatrix}=
		\begin{bmatrix}
			1&-1&0&-1\\
			-1&0&-1&1
		\end{bmatrix}
	\end{align*}
and find that the sum of its first and fourth columns are $\mbox{col}(0,\,0)$. According to Theorem 1 in \cite{fornasini2012stability},  no common copositive Lyapunov function can be found to confirm the conclusion.  By contrast, we can verify all assumptions in Lemma \ref{thm:1observer} and obtain the convergence and positivity of \eqref{sys:observer-1} using quadratic-type common  Lyapunov function as in its proof.  

\subsection{Relationships with existing results}

While most observer-based results are derived for standard linear multi-agent systems \cite{hong2006tracking, su2011cooperative,  huang2017consensus, tang2016coordination, liu2018distributed, liu2021redueced, cao2021positive,tang2018distributed, tang2020optimal}, whether and how this approach is still effective for positive multi-agent systems heavily depends upon the positivity of the observers.  To tackle this issue, we utilize a special initialization  first proposed in \cite{roszak2009necessary}  and extend it to the distributed case. This enables us a positive and convergent estimate of the expected trajectory determined by the leader. To our knowledge, Lemma \ref{thm:1observer} gives us the first affirmative answer to the existence of distributed positive observers under mild assumptions. 

It is remarkable to interpret Lemma \ref{thm:1observer} as a leaderless positive consensus problem for homogeneous multi-agent systems with identical system matrix $A_0$ and input matrix $\I_{n_0}$. As such, the derived results are consistent with existing positive consensus results in the literature \cite{valcher2017consensus, liu2019positivity,Jason2022positive, bhattacharyya2022positive, sun2017stabilization}.  Moreover, we consider time-varying communication graphs, which definitely extends the aforementioned results for only fixed graphs. Compared with the recent work \cite{cao2021positive} under a similar setting, by virtue of common Lyapunov functions, we do not need any average dwell-time condition assumption and allow arbitrary switching among connected graphs. 

With the aid of this distributed positive observer \eqref{sys:observer-1}, we provide two different kinds of distributed  controllers  to solve the  formulated leader-following positive consensus problem.  When $N=1$, the considered leader-following problem reduces to the conventional positive tracking problem for single positive plant. Thus, it can be taken as a distributed extension of these centralized results \cite{farina2000positive, roszak2010multivariable, wang2019output, yang2021proportional}. Moreover, different from most existing positive consensus results \cite{valcher2013stabilizability, valcher2017consensus, liu2019positivity, bhattacharyya2022positive,Jason2022positive,Jason2020robust, wu2018observer}, our distributed observer-based designs do not require the agents to be homogeneous. In fact, under some standard matrix equations assumption, the agent dynamics (including the followers and the leader) are allowed to be mutually different from each other. We will also illustrate this point by an example in Section \ref{sec:simulation}.  
	
\section{Simulation} \label{sec:simulation}

In this section, we consider a six-agent system with positive dynamics to illustrate the effectiveness of our preceding positive consensus algorithms. 

Suppose the system matrices of this multi-agent system are given as follows: 
	 \begin{align*}
	 	&A_0=\begin{bmatrix} 1&0&0\\ 0&0&1 \\ 0&1&0 \end{bmatrix}, \quad C_0=\begin{bmatrix}1&1&0.5\end{bmatrix}\\
		&A_1=A_2=\begin{bmatrix} 0.1&0.2&0.3 \\ 0.1& 0.3& 0.2 \\  0.1& 0.3&0.1 \end{bmatrix}, \quad B_1=B_2=\begin{bmatrix} 1\\1\\2 \end{bmatrix}, \quad C_1=C_2  =\begin{bmatrix} 1 & 0 &2\end{bmatrix} \\
		&A_3=A_4=\begin{bmatrix} 0&1\\1 & 0 \end{bmatrix}, \quad B_3=B_4=\begin{bmatrix} 1\\2 \end{bmatrix},\quad C_3=C_4  =\begin{bmatrix}1 &0 \end{bmatrix}  \\
		&A_5= 0.5, \quad  B_5=1,\quad  C_5  =1
	\end{align*}

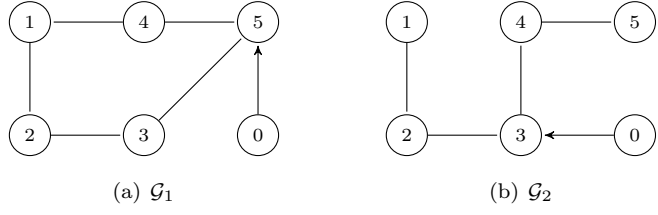
\begin{figure}
	\centering
	\subfigure[$\mathcal{G}_1$]
	{\centering
		\begin{tikzpicture}[shorten >=1pt, node distance=1.5 cm, >=stealth',
			every state/.style ={circle, minimum width=0.4cm, minimum height=0.3cm}]
			\node[align=center,state](node1) {\scriptsize 1};
			\node[align=center,state](node2)[below of=node1]{\scriptsize 2};
			\node[align=center,state](node3)[right of=node2]{\scriptsize 3};
			\node[align=center,state](node4)[above of=node3]{\scriptsize 4};
			\node[align=center,state](node5)[right of=node4]{\scriptsize 5};
			\node[align=center,state](node0)[right of=node3]{\scriptsize 0};
			\path[-] (node1) edge (node2)
			(node2) edge (node3)
			(node4) edge (node1)
			(node4) edge (node5)
			(node3) edge (node5);
			\path[->] (node0) edge (node5);
	\end{tikzpicture}}\qquad \qquad 
	\subfigure[$\mathcal{G}_2$]
	{\centering
		\begin{tikzpicture}[shorten >=1pt, node distance=1.5 cm, >=stealth',
			every state/.style ={circle, minimum width=0.4cm, minimum height=0.3cm}]
			\node[align=center,state](node1) {\scriptsize 1};
			\node[align=center,state](node2)[below of=node1]{\scriptsize 2};
			\node[align=center,state](node3)[right of=node2]{\scriptsize 3};
			\node[align=center,state](node4)[above of=node3]{\scriptsize 4};
			\node[align=center,state](node5)[right of=node4]{\scriptsize 5};
			\node[align=center,state](node0)[right of=node3]{\scriptsize 0};
			\path[-] (node1) edge (node2)
			(node2) edge (node3)
			(node3) edge(node4)
			(node4) edge (node5);
			\path[->] (node0) edge (node3);
	\end{tikzpicture}}
	\caption{The communication graphs in our example.} \label{fig:graph1}
\end{figure}

\begin{figure}
	\centering
	\subfigure[$||\tilde w_i(k)||$]{
		\includegraphics[width=0.44\textwidth]{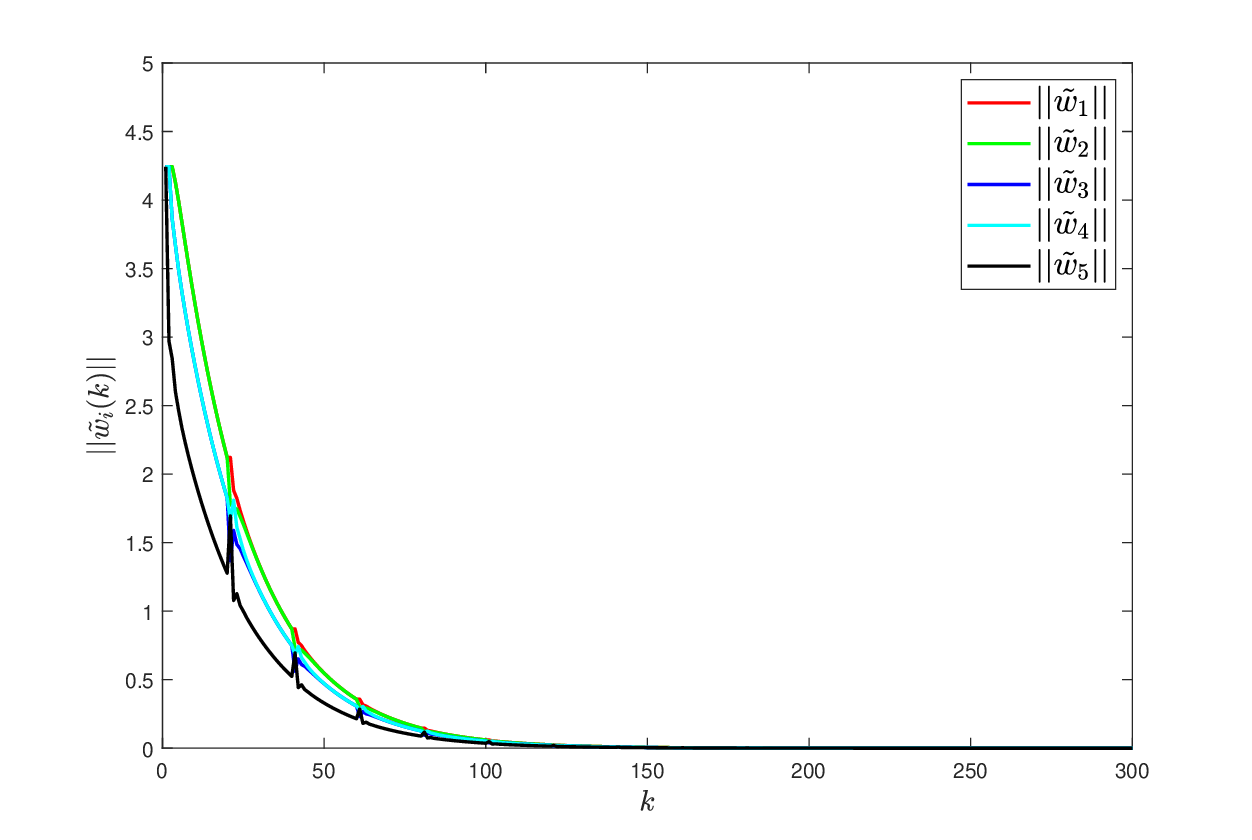} }
	\quad 
	\subfigure[${w}_2$]{
		\includegraphics[width=0.44\textwidth]{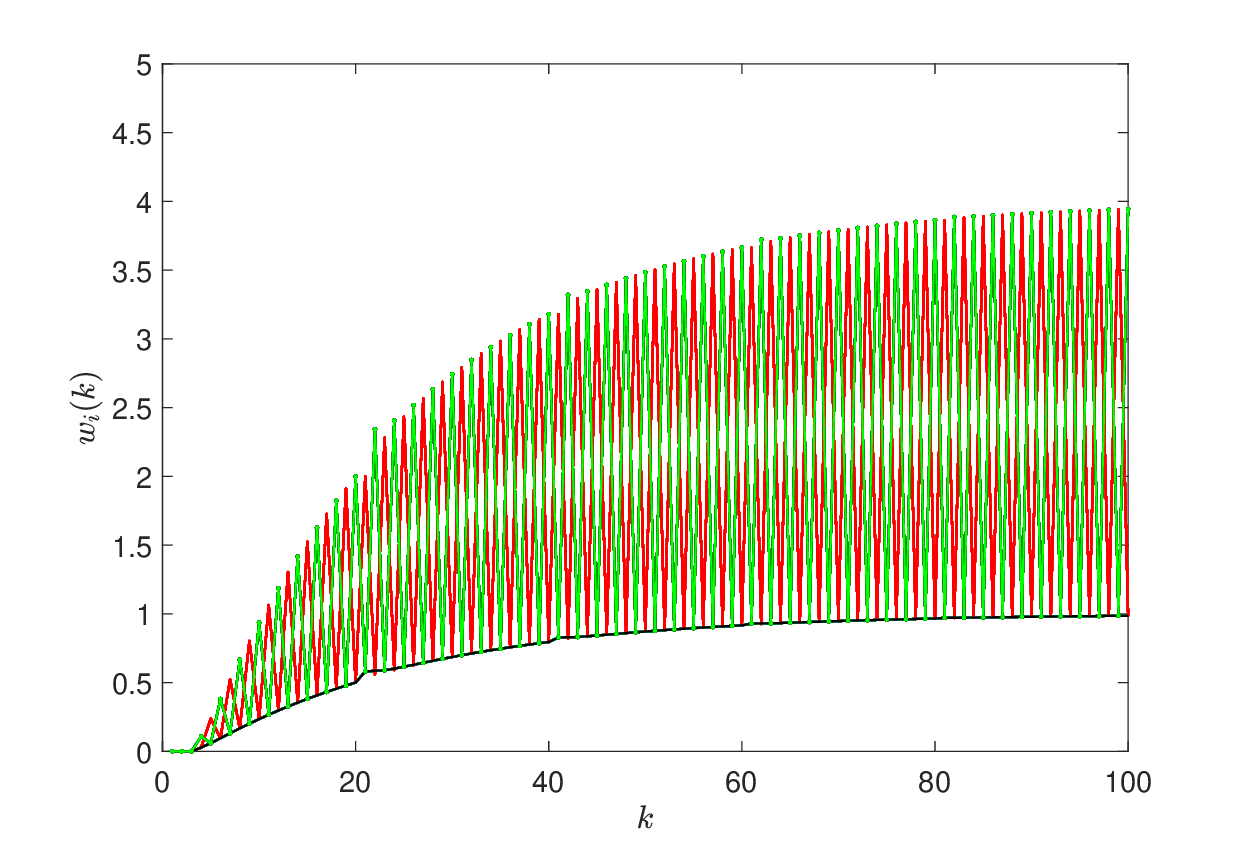}
		}
	\caption{Performance of the distributed positive observer \eqref{sys:observer-1}.}
	\label{fig:simu:observer}
\end{figure}

\begin{figure}
	\centering
	\includegraphics[width=0.80\textwidth]{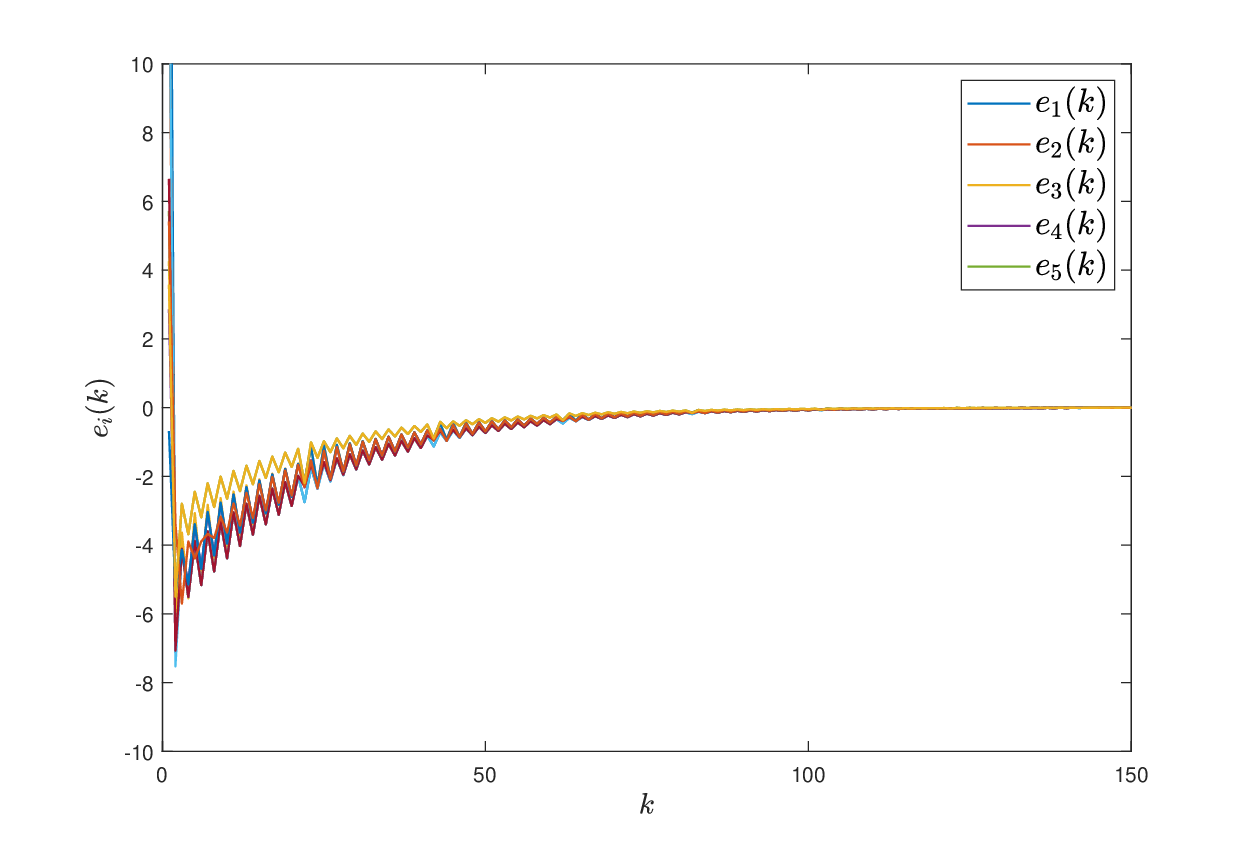} 
	\caption{Tracking error under output feedback controller \eqref{con-obs-2}.}
	\label{fig:simu:output-error}
\end{figure} 

By Lemma \ref{lem:positive}, these agents are all positive. Moreover, the followers are positively stabilizable and detectable according to Theorem 8  in   \cite{roszak2009necessary}. We can also confirm Assumption \ref{ass:exo} by letting $P_0=\I_3$.  Meanwhile, we can numerically solve the regulator equations and obtain 
  \begin{align*}
 	&	X_1=X_2=\begin{bmatrix}
 		0.2939  & 0.2456   & 0.1952\\
 		0.2873  &  0.2486  &  0.1823\\
 		0.3531  &  0.3772  &  0.1524
 	\end{bmatrix},\quad U_1=U_2=\begin{bmatrix}   0.1011 &  0.0078  & 0.1439 \end{bmatrix}\\
 	&	X_3=X_4=\begin{bmatrix}
 	   1.0000 &1.0000 & 0.5000\\
 	1.0000 & 0.5000 & 1.0000
 	\end{bmatrix},\quad U_3=U4=\begin{bmatrix}0& 0 &  0 \end{bmatrix} \\
 	& X_5=\begin{bmatrix}
 	 1.0000 & 1.0000 & 0.5000
 	\end{bmatrix},\quad U_5=\begin{bmatrix} 0.5000  &   0&  0.7500\end{bmatrix}
 \end{align*}
Hence, Assumption \ref{ass:regeq} also holds.  Suppose the communication topology is alternatively switching  between two digraphs $\G_1$ and $\G_2$ given in Fig.~\ref{fig:graph1} every $20$ steps.  Then, we can resort to Theorems \ref{thm:main} and \ref{thm:main-2}  to solve the corresponding leader-following positive consensus  by a distributed controller of the form \eqref{con-obs-1} or \eqref{con-obs-2}. 

In the simulation, we solve the leader-following positive consensus problem by output feedback controller \eqref{con-obs-2}.  We follow the procedure in Algorithm \ref{alg:state-feedback} to choose the parameters. Solving the corresponding linear matrix inequalities, we choose $K_{11}=K_{12}=-\begin{bmatrix}  0.0667  & 0.1833   &0.1167 \end{bmatrix}$, $K_{13}=K_{14}=-\begin{bmatrix} 0.4000&0.2000\end{bmatrix}$, and $K_{15}= -0.5000$ for positive stabilization.  Similarly, we let $K_{31}=K_{32}=\begin{bmatrix} 0.0600 & 0.1600 & 0.1000 \end{bmatrix}^\top $,  $K_{33}=K_{34}=\begin{bmatrix} 0&1 \end{bmatrix}^\top $, and $K_{35}=0.5000$ for the local positive observer part. Finally, we let $\mu=0.3$ and $w_i(0)={\bf 0}$ for the distributed positive observer \eqref{sys:observer-1}. The rest of initial conditions are randomly generated between $0$ and $12$.

We first show the convergence performance of our observer in Fig.~\ref{fig:simu:observer} where the estimation error $||\tilde w_i||$ converges to ${  0}$ quickly. We also list the full trajectory of $w_2(k)$ as an example to confirm the positivity of this observer. Under the distributed output feedback controller \eqref{con-obs-2}, the tracking error of each agent is given in Fig.~\ref{fig:simu:output-error}, where the leader-following consensus is reached as we expected. We then list the profile of  $x_2(k)$ in Fig.~\ref{fig:simu:output-posivity}. It can be found that all components of $x_2(k)$ are kept to be nonnegative.  These observations confirm the effectiveness of our preceding designs to solve the leader-following positive consensus problem for heterogeneous multi-agent systems under time-varying communication graphs.

\begin{figure}
	\centering
	\includegraphics[width=0.84\textwidth]{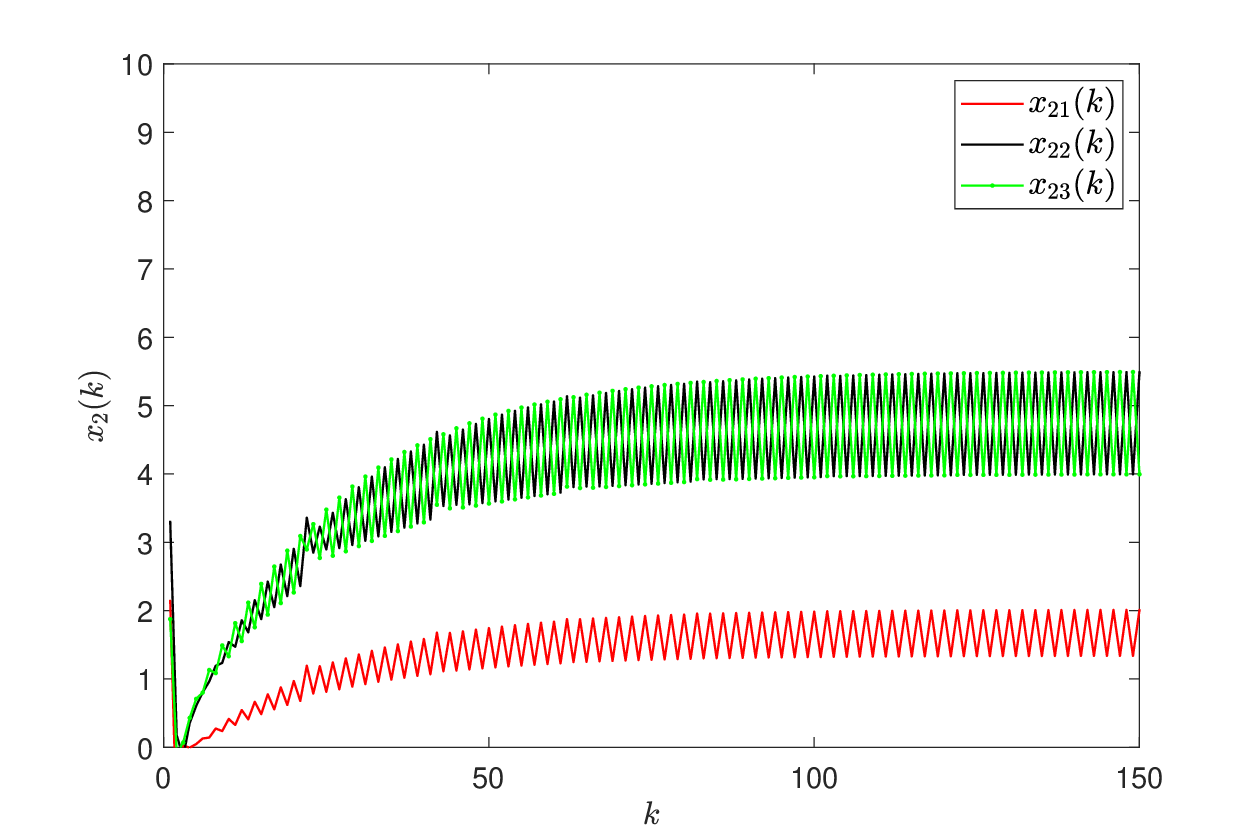} 
	\caption{Positivity of $x_2(k)$ under output feedback controller \eqref{con-obs-2}.}
	\label{fig:simu:output-posivity}
\end{figure}

\section{Conclusion} \label{sec:conclusion}
 
This paper has studied the leader-following positive consensus problem for a group of discrete-time linear multi-agent systems. To handle the heterogeneous agent dynamics and time-varying graphs, we have developed applicable distributed positive observers for each follower and then proposed two different kinds of distributed controllers via incorporating the estimate of positive observers. The followers are shown to be able to track the reference leader as well as fulfilling the positive state constraints under time-varying communication graphs. In the future, we may consider the same problem but for uncertain positive multi-agent systems.

\section{Acknowledgment}

This work was supported by National Natural Science Foundation of China under Grant 61973043 and BUPT innovation and entrepreneurship support program.
	
\bibliographystyle{IEEEtran}
\bibliography{canre}

\end{document}